%
%
%
%
%
%
%
\documentclass[%
 reprint,
 amsmath,amssymb,
 aps,
]{revtex4-1}

\usepackage{mathrsfs}

\usepackage{graphicx}
\usepackage{stmaryrd}
\usepackage{amssymb}
\usepackage{amsmath, cancel, centernot }
\usepackage{amsthm}
\usepackage[mathscr]{eucal}
\usepackage{mathtools}
\usepackage{epstopdf}

\usepackage{graphicx}
\usepackage{dcolumn}
\usepackage{bm}


\makeatletter
\newcommand\be{\@ifstar{\[}{\begin{equation}}}
\newcommand\ee{\@ifstar{\]}{\end{equation}}}
\newcommand\bp{\begin{pmatrix}}
\newcommand\ep{\end{pmatrix}}

\makeatother

\begin{document}


\title{Experimental Non-Violation of the Bell Inequality}

\author{T.N.Palmer}
 \email{tim.palmer@physics.ox.ac.uk}
\affiliation{%
 Department of Physics\\
 University of Oxford
}%


\date{\today}

\begin{abstract}

A finite non-classical framework for physical theory is described which challenges the conclusion that the Bell Inequality has been shown to have been violated experimentally, even approximately. This framework postulates the universe as a deterministic locally causal system evolving on a measure-zero fractal-like geometry $I_U$ in cosmological state space. Consistent with the assumed primacy of $I_U$, and $p$-adic number theory, a non-Euclidean (and hence non-classical) metric $g_p$ is defined on cosmological state space, where $p$ is a large but finite Pythagorean prime. Using number-theoretic properties of spherical triangles, the inequalities violated experimentally are shown to be $g_p$-distant from the CHSH inequality, whose violation would rule out local realism. This result fails in the singular limit $p=\infty$, at which $g_p$ is Euclidean. Broader implications are discussed. 
\end{abstract}

\pacs{Valid PACS appear here}
\maketitle


\section{\label{sec:level1}Introduction}

Recent experiments (e.g. \cite{Shalm}) have seemingly put beyond doubt the conclusion that the CHSH version 
\begin{equation}
\label{CHSH}
|\text{Corr}(0,0)+\text{Corr}(1,0)+\text{Corr}(0,1)-\text{Corr}(1,1)| \le 2
\end{equation}
of the Bell Inequality is violated robustly for a range of experimental protocols and measurement settings. As a result, it is widely believed that physical theory cannot be based on Einsteinian notions of realism and local causality (`local realism'). Here, $\text{Corr}(X,Y)$ denotes the correlation between spin measurements performed by Alice and Bob on entangled particle pairs produced in the singlet quantum state, where $X=0, 1$ and $Y=0,1$ correspond to pairs of freely-chosen points on Alice and Bob's celestial spheres, respectively.  

Of course, in the \emph{precise} form as written, (\ref{CHSH}) has not been shown to have been violated experimentally. In practice, the four correlations on the left-hand side of (\ref{CHSH}) are each estimated from a separate sub-ensemble of particles with measurements performed at different times and/or spatial locations. Hence, for example, the measurement orientation corresponding to $Y=0$ for the first sub-ensemble cannot correspond to \emph{precisely} the same measurement orientation $Y=0$ for the second sub-ensemble; as a matter of principle Bob cannot shield his apparatus from the effects of gravitational waves associated for example with distant astrophysical events. Hence, as a matter of principle, what is actually violated experimentally is not (\ref{CHSH}) but 
\be
\label{CHSHmod}
|\text{Corr}(0,0)+\text{Corr}(1,0')+\text{Corr}(0',1)-\text{Corr}(1',1')| \le 2
\ee
where, relative to the Euclidean metric of space-time, $0\approx 0'$ and $1 \approx 1' $ for $X$ and $Y$.

Could the difference between  $0\approx 0' $, $1 \approx 1' $ on the one hand, and $0 = 0'$, $1 = 1'$ on the other, actually matter? More specifically, is there a plausible framework for physical theory where (\ref{CHSH}) is the singular \cite{Berry} rather than the smooth limit of (\ref{CHSHmod}) as $0' \rightarrow 0$, $1' \rightarrow 1$ and where (\ref{CHSHmod}) is therefore distant from (\ref{CHSH}) no matter how accurate are our finite-precision experiments? Intuitively it would seem not, as Bell \cite{Bell:1964} himself argued with a form of `epsilonic' analysis. However, this and related analyses \cite{WoodSpekkens} crucially assume that the metric of \emph{state space} (in contrast with space-time) is Euclidean. In conventional theory (classical and quantum) where state space is constructed from the fields $\mathbb R$ and $\mathbb C$ and where the Euclidean metric plays a vital role in defining $\mathbb R$ and $\mathbb C$ as continuations of the more primitive rationals $\mathbb Q$, this assumption is implicit. However, motivated by both nonlinear dynamical systems theory and $p$-adic number theory, we outline in Section \ref{IST} a plausible and robust locally causal framework where the metric on state space is explicitly not Euclidean. Within this framework, complex Hilbert states can have probabilistic interpretations as uncertain trajectory segments on a measure-zero fractal invariant set $I_U$ in cosmological state space, providing certain descriptors of these states take rational values. By implication, states with irrationally-valued descriptors have no ontic status as elements of $I_U$. A metric $g_p$ (where $p$ is a large prime) is introduced on state space which respects the primacy of $I_U$ and with respect to which such ontic and non-ontic states are necessarily distant from one another. In Section \ref{CHSHsec} it is shown that the violation of (\ref{CHSHmod}) is generically robust to $g_p$-small-amplitude perturbations. However, the set of all inequalities encompassed by such perturbations does not and cannot include the Bell inequality (\ref{CHSH}), whose violation is needed to rule out local realism (\ref{CHSH}). As shown, (\ref{CHSH}) is necessarily constructed from states with irrational descriptors, i.e. non-ontic states not lying on $I_U$. In this sense (\ref{CHSH}) is $g_p$-distant from (\ref{CHSHmod}). 

Using $g_p$, Meyer's \cite{Meyer:1999} result that finite-precision measurements can negate quantum no-go theorems has been extended to the Bell Theorem. In particular,  experiment has not ruled out, even approximately, realistic, locally causal theories of physics based on non-Euclidean state-space metrics. Some implications for theoretical physics more generally are discussed in Section \ref{conclusions}. 
\bigskip

\section{Invariant Set Theory}
\label{IST}

Results below summarise more detailed analysis given in \cite{Palmer:2017}. We treat the universe $U$ as a self-contained locally causal deterministic system evolving over multiple epochs on some compact measure-zero fractal invariant set $I_U$ in cosmological state space \cite{Palmer:2009a} \cite{Palmer:2014}. Fig \ref{fractal} illustrates the state-space geometry of $I_U$ locally. On the left is shown, at some $j$th fractal iterate of $I_U$, a single state-space trajectory segment (`history') associated with some isolated sub-system evolving over time. At the $j+1$th iterate, this trajectory segment comprises a helix of $N \gg 0$ fine-scale trajectories and an additional $N+1$th trajectory (not shown) at the centre of the helix. Here $1/N$ is plausibly linked to the gravitational coupling constant, the squared mass of the electron in Planck units i.e. $O(10^{-45})$, but results here do not depend on the specific value of $N$ (except that, to link with the complex Hilbert vector formalism below, $N$ must be divisible by 4). At higher iterates (not shown), elements of this helix are themselves helical.  Hence, locally, $I_U$ is a Cantor set $\hat C(p)$ of trajectory segments, i.e. equal to the product $\hat C(p) \times \mathbb R$, where $p=N+1$, and $\hat C(p)$ is based on $p$ iterated elements based on a regular $N$-gon, with an additional  $N+1$th element at the centre \cite{Katok}. Points in state space which do not lie on $I_U$ (e.g. lie in the gaps in the helix) have no ontic status and do not correspond to elements of physical reality.  It is assumed that the laws of physics in their most primitive form derive from the geometry of $I_U$. Here, for example, the relationship $E=\hbar \omega$, relates a source term for the geometry of space-time to the slope $\omega$ of the $j+1$th iterate trajectories as they wind around a $j$th iterate trajectory in state space. In practice, it can be assumed that these fractal iterates terminate at some finite $J$th value and that $I_U$ is a fractal-like limit cycle.

The $N$ trajectory segments in Fig \ref{fractal} are labelled according to a process illustrated in Fig \ref{decohere}a, associated with the divergence and nonlinear clustering of trajectories into two distinct state-space clusters labelled $a$ and $\cancel a$. This corresponds to the generic phenomenon of decoherence as the sub-system interacts with its environment. Hence, each fine-scale trajectory in the helix in Fig \ref{fractal} can be labelled $a$ or $\cancel a$ according to whether the trajectory evolves to the $a$ or the $\cancel a$ cluster. The $N+1$th trajectory at the centre of the helix is presumed to evolve to the unstable equilibrium between clusters. A second period of decoherence is also shown involving a helix of $j+2$nd iterate trajectory segments. At the $j$th iterate (Fig \ref{decohere}b), such decoherence appears to resemble the branching process of the Everettian interpretation. However, this is illusory and moreover neighbouring trajectories do not correspond to `many worlds' but to the (mono-) universe at later or earlier cosmological epochs. In the analysis below, $j$ plausibly terminates finitely, making $I_U$ a finite fractal-like limit cycle. 

\begin{figure}
\centering
\includegraphics[scale=0.2]{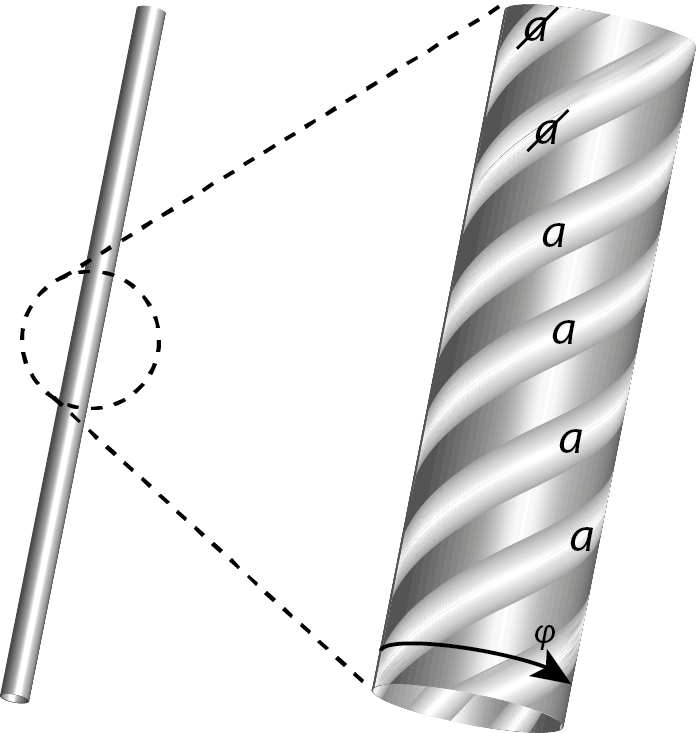}
\caption{\emph{A trajectory segment at some $j$th fractal iterate is found to comprise a helix of $N \gg 0$  $j+1$th iterate trajectory segments. The fine-scale trajectory segments are each labelled symbolically $a$ or $\cancel a$ according to the cluster to which they evolve under the process of decoherence (illustrated in Fig \ref{decohere} below). The segments all belong to a single trajectory corresponding to a compact measure-zero fractal invariant set $I_U$ on which the universe evolves in cosmological state space.}}
\label{fractal}
\end{figure}

Within this geometric framework, `reality' can be considered some uncertain $1\le I \le N$th fine-scale trajectory segment, and, as discussed in \cite{Palmer:2017}, can be represented probabilistically by the complex Hilbert vector
\be
\label{complexqubit}
\cos \frac{\theta}{2} \; |a\rangle+  e^{i \phi} \sin \frac{\theta}{2} \;|\cancel a \rangle
\ee
where $\cos^2 \theta/2$ equals the fraction $n_1/N$ of trajectories labelled $a$, and $\phi$ describes a fractional rotation $2\pi n_2/N$ of the helix ($1\le n_1 \le N,\  1 \le n_2 \le N$). The orthogonal eigenvectors $|a\rangle$ and $|\cancel a \rangle$ correspond to the two distinct clusters $a$ and $\cancel a$. In particular it is necessary that $\phi/2\pi$, $\cos^2 \theta/2$, and hence $\cos \theta$ are rational numbers. By contrast, a putative Hilbert vector where $\cos \theta \notin \mathbb Q$ or $\phi/2\pi \notin \mathbb Q$ cannot correspond (probabilistically) to any trajectory segment on $I_U$ and therefore cannot correspond to an ontic state. More general tensor-product Hilbert states can also be used to represent uncertain multi-variate properties of the $I$th trajectory segment. Again it is necessary that all squared amplitudes are rational, and all complex phase angles are rational multiples of $2\pi$. The following number-theoretic result is key:

\begin{figure}
\centering
\includegraphics[scale=0.3]{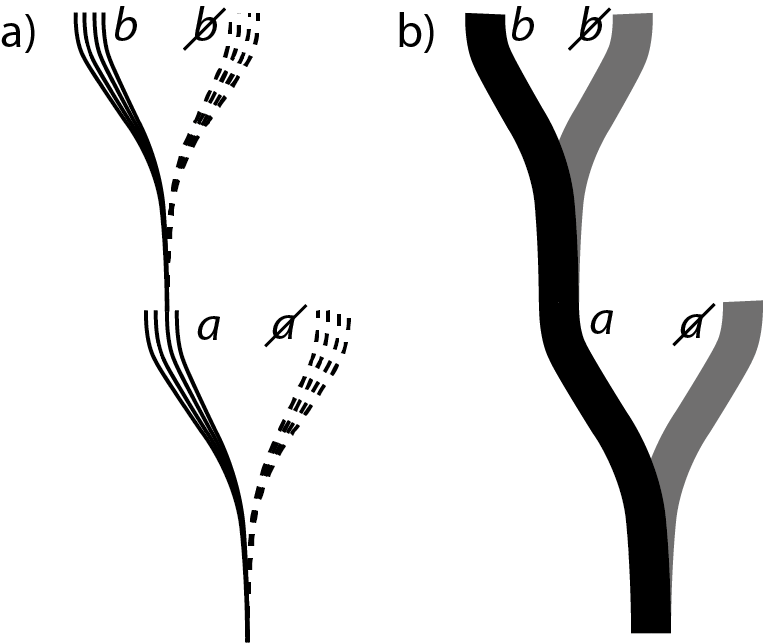}
\caption{\emph{a) Associated with environmental decoherence, $j+1$th iterate trajectories are shown diverging and clustering into two clusters $a$ and $\cancel a$. Similarly, at later time, $j+2$th iterate trajectories diverge and cluster into two clusters $b$ and $\cancel b$. In this way, time can be parametrised by fractal iterate number. b) From the perspective of the $j$th iterate of $I_U$, a) resembles Everettian `branching'.}}
\label{decohere}
\end{figure}

\newtheorem{theorem}{Theorem}
\begin{theorem}
\label{theorem}
 Let $\phi/\pi \in \mathbb{Q}$. Then $\cos \phi \notin \mathbb{Q}$ except when $\cos \phi =0, \pm \frac{1}{2}, \pm 1$. \cite{Niven, Jahnel:2005}
\end{theorem}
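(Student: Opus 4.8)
\emph{Proof proposal.} The plan is to deploy the theory of algebraic integers, exploiting the fact that the hypothesis $\phi/\pi \in \mathbb{Q}$ forces $e^{i\phi}$ to be a root of unity. Writing $\phi = \pi r$ with $r = m/n \in \mathbb{Q}$ in lowest terms, one has $(e^{i\phi})^{2n} = e^{2\pi i m} = 1$, so $\zeta := e^{i\phi}$ is a root of the monic integer polynomial $x^{2n}-1$ and is therefore an algebraic integer. First I would record that the algebraic integers form a ring, so that $2\cos\phi = \zeta + \zeta^{-1} = \zeta + \bar\zeta$, being a sum of two algebraic integers (note $\zeta^{-1}=\bar\zeta$ is likewise a root of $x^{2n}-1$), is itself an algebraic integer.

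The crux is then the classical fact that $\mathbb{Z}$ is integrally closed in $\mathbb{Q}$: any rational number that is an algebraic integer is in fact a rational integer. Assuming, for the nontrivial case, that $\cos\phi \in \mathbb{Q}$, the quantity $2\cos\phi$ is simultaneously rational and an algebraic integer, hence $2\cos\phi \in \mathbb{Z}$. Since $|\cos\phi|\le 1$ we have $|2\cos\phi|\le 2$, so the only admissible values are $2\cos\phi \in \{-2,-1,0,1,2\}$, that is $\cos\phi \in \{0,\pm\tfrac12,\pm 1\}$, which is exactly the list of exceptions in the statement. For every other rational value of $\phi/\pi$ the number $\cos\phi$ must therefore be irrational, and the theorem follows by contraposition.

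I expect the main obstacle to be the integrality step rather than the bookkeeping, namely establishing that a rational algebraic integer is an ordinary integer (equivalently, applying the rational root theorem to the monic minimal polynomial of $2\cos\phi$). An elementary self-contained route dispenses with integral closure altogether, using the Chebyshev-type recurrence $a_{k+1} = (2\cos\phi)\,a_k - a_{k-1}$ with $a_0=2$, $a_1=2\cos\phi$, which yields $a_k = 2\cos(k\phi)$ and, by induction on $k$, expresses $a_k$ as a \emph{monic} degree-$k$ polynomial in $a_1$ with integer coefficients. Taking $k=n$ gives $\cos(n\phi)=\cos(m\pi)=\pm1$, so $a_n=\pm2\in\mathbb{Z}$; writing $2\cos\phi = p/q$ in lowest terms, clearing denominators in the monic relation and reducing modulo $q$ leaves $p^n\equiv 0\pmod q$, whereupon $\gcd(p,q)=1$ forces $q=1$. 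In this alternative the delicate point is the inductive proof that the polynomial is monic with integer coefficients, since it is precisely the surviving leading term $a_1^n = p^n/q^n$ that drives the contradiction when $q>1$; once that is secured the argument closes exactly as above.
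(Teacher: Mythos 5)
Your proposal is correct, but it proves the theorem by a genuinely different route than the paper. The paper's proof is a denominator-growth argument under angle \emph{doubling}: writing $2\cos\phi=a/b$ in lowest terms, the identity $2\cos 2\phi=(2\cos\phi)^2-2=(a^2-2b^2)/b^2$ is shown to remain in lowest terms, so if $b\ne\pm 1$ the denominators of $2\cos 2^k\phi$ grow without bound; meanwhile $\phi/\pi=m/n$ forces the sequence $(2\cos 2^k\phi)_k$ to take only finitely many values (pigeonhole on the angles modulo $2\pi$), a contradiction unless $b=\pm 1$. Your first argument instead invokes the ring of algebraic integers: $e^{i\phi}$ is a root of unity, hence $2\cos\phi=\zeta+\bar\zeta$ is an algebraic integer, and a rational algebraic integer lies in $\mathbb{Z}$, so $|2\cos\phi|\le 2$ pins down the five exceptional values. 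Your second argument is an elementary surrogate for the first: the recurrence $a_{k+1}=a_1a_k-a_{k-1}$ with $a_0=2$, $a_1=2\cos\phi$ exhibits $2\cos(n\phi)=\pm 2$ as a monic integer polynomial in $2\cos\phi$, and the rational root theorem finishes. Both of your arguments are sound (the induction on monicity and the ring property of algebraic integers are standard facts, which you correctly flag as the respective crux points). The trade-off: the paper's proof needs only the double-angle formula and coprimality bookkeeping, making it maximally self-contained, but it is special to the cosine-doubling structure; your algebraic-integer proof imports heavier machinery yet is shorter and generalizes immediately (e.g. to Niven-type statements for $\sin$ and $\tan$, or to showing $2\cos\phi$ is an algebraic integer of bounded degree), while your Chebyshev variant sits in between, elementary like the paper's but using the full multiple-angle recursion rather than only doubling.
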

\begin{proof} 
Assume that $2\cos \phi = a/b$ where $a, b \in \mathbb{Z}, b \ne 0$ have no common factors.  Since $2\cos 2\phi = (2 \cos \phi)^2-2$, then $2\cos 2\phi = (a^2-2b^2)/b^2$. Now $a^2-2b^2$ and $b^2$ have no common factors, since if $p$ were a prime number dividing both, then $p|b^2 \implies p|b$ and $p|(a^2-2b^2) \implies p|a$, a contradiction. Hence if $b \ne \pm1$, then the denominators in $2 \cos \phi, 2 \cos 2\phi, 2 \cos 4\phi, 2 \cos 8\phi \dots$ get bigger without limit. On the other hand, if $\phi/\pi=m/n$ where $m, n \in \mathbb{Z}$ have no common factors, then the sequence $(2\cos 2^k \phi)_{k \in \mathbb{N}}$ admits at most $n$ values. Hence we have a contradiction. Hence $b=\pm 1$ and $\cos \phi =0, \pm\frac{1}{2}, \pm1$. 
\end{proof}
We now define a metric $g_p$ where ontic states on $I_U$ and non-ontic states off $I_U$ are necessarily distant from one another (no matter how close they may appear from a Euclidean perspective). By Ostrowsky's theorem, the Euclidean and $p$-adic metrics are the only inequivalent norm-induced metrics \cite{Katok} \cite{Khrennikov}. For algebraic reasons it is important for $p$ to be prime. Hence, with $N$ divisible by 4, $p=N+1 \gg 1$ must be a Pythagorean prime, the sum of two squared integers. Let $x$ and $y$ denote trajectory segments on $I_U$, i.e. points on $\hat C(p)$. Firstly if both $x, y \in \hat C(p)$ then $g_p \le 1$ is Euclidean. This is consistent with the fact that the set of $p$-adic integers $\mathbb Z_p$ is homeomorphic to $\hat C(p)$ \cite{Katok}. Secondly, if at least one of $x, y \notin \hat C(p)$ then $g_p(x,y)=p$ if $x \ne y$, otherwise $g_p(x,y)=0$. This is consistent with the notion that the $p$-adic distance between a $p$-adic integer and a non-integer $p$-adic number is at least $p$ and therefore $\gg 1$. It is easily shown that $g_p$ satisfies the axioms for a metric (e.g. the triangle inequality) on cosmological state space. 

\section{The Bell Inequality}
\label{CHSHsec}

Consider now the relationship between (\ref{CHSH}) and (\ref{CHSHmod}) from the perspective of the framework above. As mentioned, $X=0,1$, $Y=0,1$ are four random points on the sphere, three of which (relevant to the discussion below) are shown in Fig \ref{F:CHSH}a. Let $\theta_{XY}$ denote the relative orientation between an $X$ point and a $Y$ point. Recall that complex Hilbert states can represent the multi-variate probabilistic elements of trajectory segments on $I_U$ providing squared amplitudes are rational. Hence,  in invariant set theory, $\text{Corr}(X,Y)=-\cos \theta_{XY}$ providing $\cos \theta_{XY} \in \mathbb Q$. 

Let us suppose that Alice freely chooses $X=0$ and Bob $Y=0$ when measuring a particular entangled particle pair. Then it must be the case that $\cos \theta_{00} \in \mathbb Q$.  Could Alice and Bob have chosen $X=1$ and $Y=0$, given that they actually chose $X=0$, $Y=0$ respectively? In other words, does the world in which this counterfactual experiment takes place also lie on $I_U$? To answer the question in the affirmative we require $\cos \theta_{10} \in \mathbb Q$. However, from the cosine rule for spherical triangles, we have 
\be 
\label{cosinerule}
\cos \theta_{10}=\cos \theta_{00} \cos \alpha_X + \sin \theta_{00} \sin \alpha_X \cos \gamma
\ee
where $\alpha_X$ is the angular distance between $X=0$ and $X=1$. Now it is always possible for Alice to send the particle which she has just measured in the $X=0$ direction, back into the measuring apparatus to be again measured in the $X=1$ direction. Hence $\cos \alpha_X$ must be rational. Now, we also require the angle $\gamma$ to be a rational multiples of $2 \pi$. This would be so if the three directions $X=0,1$ and $Y=0$ were coplanar \emph{exactly}, so that $\gamma=180^\circ$ \emph{precisely}. However, because of ubiquitous unshieldable gravitational waves, this cannot be the case. Hence, $\cos \theta_{01}$ is the sum of two terms, the first a rational and the second the product of three independent terms, the last of which is irrational. Being independent, these three terms cannot conspire to make their product rational. Hence $\cos \theta_{01}$ is the sum of a rational and an irrational and must therefore be irrational. Hence the world in which the counterfactual experiment takes place has no ontic status and is $g_p$-distant from worlds on $I_U$. Hence the counterfactual question cannot be answered in the affirmative: $\text{Corr}(1,0)$ is undefined. In general it is never the case that all four correlations in (\ref{CHSH}) are definable on $I_U$ - the Bell inequality is always undefined in invariant set theory.

\begin{figure}
\centering
\includegraphics[scale=0.3]{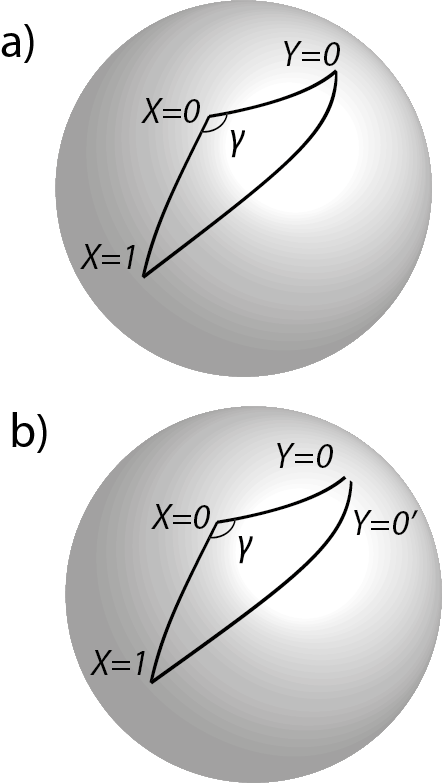}
\caption{\emph{a) In general it is impossible for all the cosines of the angular lengths of all three sides of the spherical triangle to be rational, and the internal angles rational multiples of $2\pi$. b) What actually occurs when (\ref{CHSHmod}) is tested experimentally. Here the cosines of the angular lengths of all sides are rational. In a precise sense b) is $g_p$ distant from a).}}
\label{F:CHSH}
\end{figure}

An experimenter might ask how one could set up an experiment with sufficient care to ensure that the corresponding Hilbert state descriptors were rational rather than irrational. The experimenter need take no care: if an experiment is performable, i.e. corresponds to some $U \in I_U$, then by construction the descriptors must be rational. Physical perturbations (e.g. gravitational waves in space-time) only introduce uncertainty in the values of the rational descriptors and not in the fact that they are rational. Conversely, if the descriptor of a counterfactual state is irrational, then no amount of noise which respects the primacy of $I_U$ can change it into an ontic state. This property provides an attractive finitist feature which is missing in conventional physical theories based on $\mathbb R$ or $\mathbb C$ and hence on the Euclidean state-space metric. Hence in the real world of experiment, both $\cos \theta_{00}$ and $\cos\theta_{10'}$ in (\ref{CHSHmod}) are necessarily and robustly rational (Fig \ref{F:CHSH}b), consistent with the fact the individual sub-ensembles are measured at different times, and that unshieldable gravitational waves ensure that orientations are not \emph{precisely} the same when these different sub-ensembles are measured. Indeed we can infer the existence of an effectively infinite family of orientations where all of $\cos \theta_{00}, \cos\theta_{10'}, \cos \theta_{0'1}, \cos\theta_{1'1'}$ in (\ref{CHSHmod}) are rational. However, by construction, none of the orientations so generated includes those associated with (\ref{CHSH}), which is therefore indeed the singular limit of and $g_p$-distant from (\ref{CHSHmod}). Just as the paradox of the Penrose `impossible triangle' is resolved by realising that the sides of the triangle are not necessarily close near a vertex of the triangle, so too here. As discussed in \cite{Palmer:2017}, many of the familiar `paradoxes' of quantum theory can be interpreted realistically and causally with $g_p$ as the metric of state space. 

\section{Conclusions}
\label{conclusions}

A theoretical framework based on invariant set theory has been outlined, which asserts that no physical experiment can or will be able to demonstrate that the Bell inequality (\ref{CHSH}) is violated - even approximately. In this framework, (\ref{CHSH}) is the singular limit of (\ref{CHSHmod}), and in the limit, (\ref{CHSH}) is neither satisfied nor violated: it is undefined. Key to this formulation, and motivated by $p$-adic number theory, a non-Euclidean metric $g_p$ is introduced on state space, where $p$ is a large but finite Pythagorean prime. $g_p$ respects the primacy of an assumed fractal geometry on which the universe evolves and from which the laws of physics derive. The Euclidean metric (e.g. of classical theory \cite{Richens}) arises in the singular limit $p=\infty$, whence fractal gaps shrink to zero. In this singular limit, the conclusions above fail - whence the experimental violation of (\ref{CHSHmod}) does rule out conventional classical theories of quantum physics. Being realistic, geometric and locally causal, the analysis above suggests a novel holistic perspective on the most important open question in theoretical physics today: how to synthesise quantum and gravitational physics. Given the unshieldable effects of gravitational waves in the discussion above (and other reasons discussed in \cite{Palmer:2017}), it is proposed that gravitational processes play a vital role, not only during decoherent measurement \cite{diosi:1989} \cite{Penrose:2004}, but also during non-decoherent unitary evolution. As such, this long-sought synthesis may more likely arise from a locally realistic but non-classical `gravitational theory of the quantum' than from a nonlocal or nonrealistic `quantum theory of gravity' \cite{Palmer:2017}.

\bibliographystyle{plain}
\bibliography{mybibliography}

\end{document}